\documentclass[11pt]{article}
\usepackage[T1]{fontenc}
\usepackage[utf8]{inputenc}
\usepackage{lmodern}
\usepackage{microtype,fullpage,authblk}
\usepackage{amsmath,amssymb,amsthm,mathtools,bm}
\usepackage{mathpazo}
\usepackage{booktabs,xcolor,mdframed}
\usepackage{array,float,needspace,tikz}
\usetikzlibrary{arrows.meta}

\definecolor{MainBlue}{HTML}{173B57}
\definecolor{SoftBlue}{HTML}{F1F6FA}
\usepackage[numbers,sort&compress]{natbib}
\usepackage[colorlinks=true,linkcolor=blue!55!black,citecolor=blue!55!black,urlcolor=blue!55!black,pdfborder={0 0 0}]{hyperref}
\allowdisplaybreaks
\newtheorem{theorem}{Theorem}
\newtheorem{proposition}[theorem]{Proposition}
\newtheorem{lemma}[theorem]{Lemma}

\theoremstyle{remark}

\mdfdefinestyle{theoremframe}{
backgroundcolor=SoftBlue,
linecolor=MainBlue,
linewidth=.65pt,
innerleftmargin=7pt,innerrightmargin=7pt,
innertopmargin=6pt,innerbottommargin=6pt,
needspace=6\baselineskip,
nobreak=true,
skipabove=9pt,skipbelow=9pt
}
\surroundwithmdframed[style=theoremframe]{theorem}
\newcommand{\cN}{\mathcal N}

\newcommand{\cD}{\mathcal D}
\newcommand{\cL}{\mathcal L}
\newcommand{\cE}{\mathcal E}
\newcommand{\cA}{\mathcal A}
\newcommand{\cR}{\mathcal R}
\newcommand{\Tr}{\operatorname{Tr}}

\newcommand{\rank}{\operatorname{rank}}
\newcommand{\id}{\operatorname{id}}
\newcommand{\T}{\mathsf T}
\newcommand{\ket}[1]{\lvert#1\rangle}
\newcommand{\bra}[1]{\langle#1\rvert}
\newcommand{\proj}[1]{\ket{#1}\!\bra{#1}}
\numberwithin{equation}{section}

\title{\Large\textbf{Private communication via zero-private-capacity quantum channels}}
\date{September 9, 2026}

\author[1]{Chengkai Zhu}
\author[2,*]{Xin Wang}

\affil[1]{QudeLeap Research, Shanghai 200030, China}
\affil[2]{Thrust of Artificial Intelligence, Information Hub,\par
The Hong Kong University of Science and Technology (Guangzhou), Guangzhou 511453, China}

\hypersetup{pdftitle={Private communication via zero-private-capacity quantum channels},pdfauthor={Xin Wang and Chengkai Zhu}}

\begin{document}
\maketitle

\begingroup
\renewcommand{\thefootnote}{\fnsymbol{footnote}}

\footnotetext[1]
{\href{mailto:felixxinwang@hkust-gz.edu.cn}
{felixxinwang@hkust-gz.edu.cn}}

\endgroup

\begin{abstract}
Private communication over a noisy quantum channel requires reliable transmission to the receiver and secrecy from the environment. Whether two channels with zero private capacity can jointly enable private communication is a longstanding open problem in quantum information theory. Here we resolve this problem by exhibiting a four-level channel and a qubit erasure channel with half erasure probability, each with zero private capacity, whose joint use achieves more than 0.0001903 private bits per product use. The encoding gives the receiver a linear information gain with at most quadratic environmental leakage, enabling privacy through a fixed joint measurement and classical coding. This superactivation, impossible for independent classical memoryless wiretap channels, shows that a channel's private capacity alone does not determine its value for secure communication. The initial activation example was identified through interactions with large language models, and the result has been formalized in Lean 4.
\end{abstract}

\section{Introduction}
Private communication requires a message to reach the receiver reliably
while remaining hidden from the environment. Zero private classical
capacity excludes every positive asymptotic rate, even with arbitrary
mixed encodings and collective decoding over many uses
\cite{Wyner1975,CK1978,SW1998,CWY2004,Devetak2005}.
We construct two independent, finite-dimensional memoryless quantum
channels whose private capacities vanish separately but become positive
under joint use. Independent classical memoryless wiretap channels cannot
exhibit this phenomenon, as follows from the classical wiretap
capacity formula~\cite{CK1978}.

Quantum-capacity superactivation established an analogous phenomenon for
quantum transmission~\cite{SmithYard2008}, but one factor in that
construction already supports private communication. Smith and Smolin gave early conditional evidence that channels with
arbitrarily small individual private capacities could jointly support
a large quantum capacity~\cite{SmithSmolinEarly2009}. Known private-capacity
nonadditivity~\cite{LWZG2009,SmithSmolin2009} likewise does not supply two
zero-private-capacity factors. This stronger question was raised at
QIP 2009~\cite[slide~27]{SmithQIP2009} and linked to the gap between
information comparisons and physical channel simulation~\cite{HL2023}.

In this work, we resolve the private-capacity superactivation problem by finding two  finite-dimensional memoryless channels with zero individual private capacities and positive joint private capacity. First, we exhibit a four-level channel whose environment can reproduce all receiver measurement statistics at every blocklength, establishing zero private capacity. Second, we construct an explicit encoding and decoding scheme that achieves a positive private rate when this channel is paired with a qubit erasure channel of any fixed erasure probability $1/2\le p<1$, which also has zero private capacity (Theorem~\ref{thm:private}). The encoding adds a weak signal to a mixed background, producing a linear information gain for the receiver and at most quadratic environmental leakage in the signal strength. A fixed joint measurement on each output pair and classical coding across uses suffice; at half erasure, the achievable rate exceeds $1.9030\times10^{-4}$ private bits per product use. We further show that decoders whose effects are positive under partial transpose across the two output groups cannot achieve a positive private rate, even with collective processing within each group (Proposition~\ref{prop:ppt}). Thus zero private capacity is not preserved under tensor products, and joint quantum measurements can enable private communication that remains impossible under local operations and classical communication across the output groups.



\paragraph{Lean proof.}
The capacity inequalities in Theorem~\ref{thm:private}, interpreted
through Eq.~\eqref{eq:capacity}, and the fixed-measurement information
bound have been formalized in Lean~4 using Mathlib and
Lean-QIT~\cite{Lean4,Mathlib,LeanQIT}. Appendix~\ref{app:lean} describes
the scope and correspondence with the source code at
\url{https://github.com/QudeLeap/lean-private-capacity}.

\section{The channels and main result}
\label{sec:construction}
All Hilbert spaces are finite dimensional. The main input and output
are $A=B=\mathbb C^4$, with levels numbered $0,1,2,3$. Its noise consists
of a diagonal operator and three transitions together with their reverses:
\begin{equation}
\begin{aligned}
    D&=\operatorname{diag}(1,-4,4,-1),\\
    A_1&=\ket0\bra1-2\ket1\bra2+\ket2\bra3,\\
    A_2&=\ket0\bra2-\ket1\bra3,\qquad A_3=\ket0\bra3.
\end{aligned}
\label{eq:errors}
\end{equation}
For $X\in\mathcal L(\mathbb C^4)$, define $\cN:A\to B$ by
\begin{equation}
\boxed{\begin{aligned}
        \cN(X)=\frac1{56}\bigl[&8X+DXD
        +4(A_1XA_1^\dagger+A_1^\dagger XA_1)\\
        &+12(A_2XA_2^\dagger+A_2^\dagger XA_2)
        +31(A_3XA_3^\dagger+A_3^\dagger XA_3)\bigr].
\end{aligned}}
\label{eq:main-channel}
\end{equation}
This Kraus form is completely positive. It is trace preserving because
\begin{equation}
8I+D^2+\sum_{j=1}^3 c_j(A_j^\dagger A_j+A_jA_j^\dagger)=56I,
\qquad (c_1,c_2,c_3)=(4,12,31).
\label{eq:tp}
\end{equation}
Thus, an identity branch of weight $1/7$ is mixed with seven errors,
without revealing the branch to the receiver. The encoding in
Eq.~\eqref{eq:encoding} detects all seven nonidentity errors.
The weights also admit the transpose simulator constructed in
Appendix~\ref{app:simulator}.

The helper is a qubit erasure channel~\cite{BDS1997} from $R$ to $R_B=\mathbb C^2\oplus\mathbb C\ket e$ and $0\le p\le1$:
\begin{equation}
\boxed{
\cE_{2,p}(X)=(1-p)X\oplus p\Tr(X)\proj e,
\qquad \ket e\perp\mathbb C^2.
\label{eq:erasure}
}
\end{equation}
It reaches Bob with probability $1-p$ and Eve with probability $p$. Orthogonal flags record these alternatives.

All logarithms are base two unless written $\ln$. Let $S$ be von Neumann
entropy, $h_2(x)=-x\log x-(1-x)\log(1-x)$ binary entropy
(with $0\log0=0$), and $I$ classical mutual information.
An ensemble $\{p_u,\omega_u\}$ has Holevo information
$\chi(U:B)=S(\sum_up_u\omega_u)-\sum_up_uS(\omega_u)$.
For a quantum channel $\Phi:A\to B$, fix a complementary channel
$\Phi^c:A\to E$ obtained from a Stinespring isometry. The private capacity of $\Phi$ is given by~\cite{CWY2004,Devetak2005}
\begin{equation}
P(\Phi)=\sup_{n\ge1}\frac1n\sup_{\{p_u,\rho_u\}}
\bigl[\chi(U:B^n)-\chi(U:E^n)\bigr],
\label{eq:capacity}
\end{equation}
where the inner supremum ranges over finite ensembles of density
operators on $A^{\otimes n}$. The two Holevo quantities are evaluated
on $\{p_u,\Phi^{\otimes n}(\rho_u)\}$ and
$\{p_u,(\Phi^c)^{\otimes n}(\rho_u)\}$, respectively.

Precisely, an $(n,M)$ code uses $M$ equiprobable mixed inputs $\rho_m$
and a receiver POVM $\{F_m\}$. Set $\beta_m=\Phi^{\otimes n}(\rho_m)$,
$\eta_m=(\Phi^c)^{\otimes n}(\rho_m)$, and
$\bar\eta=M^{-1}\sum_m\eta_m$. Its errors are
\begin{equation}
\varepsilon=1-\frac1M\sum_m\Tr(F_m\beta_m),\qquad
\delta=\frac1{2M}\sum_m\|\eta_m-\bar\eta\|_1.
\label{eq:errors-operational}
\end{equation}
A rate $r$ is achievable if codes have $\liminf_{n\to\infty}(\log M_n)/n\ge r$
and $\varepsilon_n,\delta_n\to0$. No preshared key, entanglement, or
auxiliary communication is supplied; encoder randomness is local and
private. One product use consumes one use of each factor.
Appendix~\ref{app:operational} relates this trace-distance convention
to Eq.~\eqref{eq:capacity}.

\begin{theorem}[Private-capacity superactivation]
\label{thm:private}
The channels in Eqs.~\eqref{eq:main-channel} and \eqref{eq:erasure}
satisfy $P(\cN)=P(\cE_{2,p})=0$ for $1/2\le p\le1$.
For every fixed $1/2\le p<1$,
\begin{equation}
    P(\cE_{2,p}\otimes\cN)\ge\frac{6\ln2}{49(27+169p)}(1-p)^2>0.
    \label{eq:rate}
\end{equation}
Every rate below this bound is achievable with a fixed binary measurement
on each product output and classical wiretap coding across uses.
\end{theorem}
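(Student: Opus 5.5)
The proof splits into the two zero-capacity claims and an explicit lower bound for the product channel. For $P(\cN)=0$ I would rely on the transpose simulator of Appendix~\ref{app:simulator}. This is a channel $\mathcal S:E\to B$ with $\mathcal S\circ\cN^c=\T\circ\cN$, where $\T$ is the transpose in the computational basis. Tensoring gives $\cN^{\otimes n}=\T^{\otimes n}\circ\mathcal S^{\otimes n}\circ(\cN^c)^{\otimes n}$. The map $\T^{\otimes n}$ is positive, trace preserving and spectrum preserving, although it is not completely positive. Consequently $\chi(U:B^n)$ equals the Holevo information of the ensemble $\{p_u,\mathcal S^{\otimes n}((\cN^c)^{\otimes n}(\rho_u))\}$. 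By data processing under the genuine channel $\mathcal S^{\otimes n}$, this is at most $\chi(U:E^n)$. Every term in Eq.~\eqref{eq:capacity} is therefore $\le0$, so $P(\cN)=0$.

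For $\cE_{2,p}$ with $p\ge1/2$ I would use antidegradability. The complement is $\cE_{2,1-p}$, and $\cE_{2,p}=\cE_{2,q}\circ\cE_{2,1-p}$ with $(1-q)p'=1-p$ for $p'=1-p$, which requires $q=(1-2p)/(1-p)\cdot(-1)\ge0$, i.e.\ $p\ge1/2$. The same data-processing argument then gives zero, and the case $p=1$ is trivial.

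For the product channel I would restrict Eq.~\eqref{eq:capacity} to $n=1$. I would use a binary ensemble $\rho_\pm=\rho_0\pm t\,\Delta$ on $R\otimes A$, where $\rho_0$ is the mixed background of Eq.~\eqref{eq:encoding} and $\Delta$ is a traceless Hermitian signal. Bob applies a fixed two-outcome POVM $\{F,I-F\}$ on $R_B\otimes B$. By data processing, $\chi(U:R_BB)\ge I(U;Y)$. The standard classical wiretap argument (Csisz\'ar--K\"orner coding combined with Devetak's privacy amplification, applied to the cq-wiretap channel $u\mapsto(Y,\ \text{Eve's state})$) then shows that every rate below $I(U;Y)-\chi(U:E_{\rm tot})$ is achievable in the operational sense of Eq.~\eqref{eq:errors-operational}, via Appendix~\ref{app:operational}. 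It remains to compute both terms as functions of $t$.

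Bob's side is elementary. $\Pr(Y=1\mid\pm)=f_0\pm t\,g$, where $g=\Tr\bigl(F(\cE_{2,p}\otimes\cN)(\Delta)\bigr)\propto(1-p)$. The bound $I(U;Y)\ge\bigl(t^2g^2/(2\ln2\,f_0(1-f_0))\bigr)-O(t^4)$, or better an exact concave lower bound valid for all $t$ in range, gives the gain. Eve's environment is $(R_E,E)$. I would bound her Holevo information by the $\chi^2$-type inequality
\[
\chi\le\frac{1}{\ln2}\sum_u p_u\Tr\bigl[(\eta_u-\bar\eta)\bar\eta^{-1}(\eta_u-\bar\eta)\bigr].
\]
Here the key structural fact to check is that $(\cE_{2,p}\otimes\cN)^c(\Delta)$ vanishes to first order. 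On the erasure branch Eve sees $R$ but Bob does not receive it, and the encoding is designed so that the seven non-identity Kraus errors are detectable and Eve's marginal is independent of the sign. As a result Eve's states differ only at order $t^2$, and her leakage is $O(t^4)$ with an explicit constant depending on $p$ through the $(27+169p)$ factor.

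The final step is to optimize $a t^2-b t^4$ over $t$, which gives $a^2/(4b)$, and to check that this matches $6\ln2\,(1-p)^2/(49(27+169p))$. Combined with the operational coding step, this yields Eq.~\eqref{eq:rate}. I expect the main obstacle to be the leakage bound: verifying that Eve's first-order term cancels exactly, and then controlling $\bar\eta^{-1}$ when the background state has small eigenvalues. The bound must hold uniformly, not only asymptotically in $t$, so that the clean constant and the figure $1.9030\times10^{-4}$ at $p=1/2$ come out rigorously. I would try first to make $\bar\eta$ block-diagonal in the flag and error-syndrome basis so that its inverse is explicit.
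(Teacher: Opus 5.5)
Your zero-capacity arguments match the paper's (transpose simulator plus data processing for $\cN$, antidegradability for $\cE_{2,p}$), apart from an arithmetic slip. Eve's copy must be retained with probability $1-q=(1-p)/p$, so $q=(2p-1)/p$, not $(2p-1)/(1-p)$, which exceeds $1$ for $p>2/3$.

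The genuine gap is in the activation step. Channel outputs are affine in $t$, so Eve's states are $\bar\eta\pm t\,\Phi^c(\Delta)$ with $\Phi=\cE_{2,p}\otimes\cN$; they cannot ``differ only at order $t^2$''. There are two cases:
\begin{itemize}
\item If $\Phi^c(\Delta)\neq0$, her leakage is $\Theta(t^2)$. That is the same order as your quadratic gain, and nothing in your plan shows Bob's constant wins.
\item If $\Phi^c(\Delta)=0$, Bob learns nothing either. The erased block of $\Phi^c(\Delta)$ is $p(\id_R\otimes\cN^c)(\Delta)$. Applying $\id_R\otimes\cD$ gives $(\id_R\otimes\T_B\cN)(\Delta)=0$, hence $(\id_R\otimes\cN)(\Delta)=0$ and $\cN(\Tr_R\Delta)=0$. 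So $\Phi(\Delta)=0$ and $g=0$.
\end{itemize}
The transpose simulator thus forces $\ker\Phi^c\subseteq\ker\Phi$, and the $t^2$-versus-$t^4$ scheme fails for every $\Delta$. There is also a feasibility problem. With $\rho_0$ of rank two, $\rho_0\pm t\Delta\succeq0$ forces $\Delta$ to be supported on $\operatorname{span}\{u_0,u_1\}$. A symmetric perturbation therefore never reaches the signal direction $w$. For the same reason, a zero-background click ($f_0=0$) would force $g=0$.

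The missing idea is to work at a boundary point, so that Bob gains \emph{linearly} while Eve leaks quadratically. Use the one-sided mixture $\rho_t=(1-t)\rho_0+t\proj w$ with priors $3/4,1/4$. By error detection \eqref{eq:detection}, Bob's effect $\proj w$ has probability exactly $0$ on the background and $a_pt$ on the signal letter, so $I(U:Y)\ge a_pt/2$. On Eve's side, the support containments $\sigma_1\preceq\frac{205}{9}\sigma_0$ and $\epsilon_1\preceq4\epsilon_0$ make the coin decomposition $\omega_s=r_s\omega_1+(1-r_s)\tau$ valid with $\tau\succeq0$. This yields the leakage bound $3\kappa_pt^2/(32\ln2)$ without inverting $\bar\eta$. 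The constant in \eqref{eq:rate} is exactly $\max_t\bigl[\tfrac{a_p}{2}t-\tfrac{3\kappa_p}{32\ln2}t^2\bigr]$, a linear-minus-quadratic optimum. It would not emerge from optimizing $at^2-bt^4$.
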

The zero-capacity statements allow arbitrary block encodings and
collective decoding. Each environment has at least as much Holevo information as its receiver for every input ensemble and every
blocklength. The product channel admits a positive private rate using
a fixed measurement within each use and classical coding across uses.
Section~\ref{sec:meaning} shows why PPT decoding cannot achieve such a rate.
At half erasure, Eq.~\eqref{eq:rate} gives
$3\ln2/10927\simeq1.903\times10^{-4}$ private bits per product use.

\section{Proof of Theorem~\ref{thm:private}}
\label{sec:proof}
We first prove the two zero-capacity statements. We then compare Bob's
information from a fixed binary measurement with Eve's leakage for the
same input ensemble.

\subsection{Why each factor has zero private capacity}
Let $\cN^c:A\to E$ be the full complement fixed in
Appendix~\ref{app:simulator}. That appendix constructs a completely
positive trace-preserving (CPTP) map $\cD:E\to B$ such that
\begin{equation}
\cD\cN^c=\T_B\cN,
\label{eq:transpose}
\end{equation}
where $\T_B$ is linear matrix transposition in the numbered basis.
This is a transpose-antidegradability certificate, in the conjugate
simulation framework~\cite{BradlerEtAl2010}. At blocklength $n$,
$\cD^{\otimes n}$ reproduces the global transpose of Bob's output.
For every finite ensemble on $A^{\otimes n}$,
\[
\chi\bigl(\{p_u,\cN^{\otimes n}(\rho_u)\}\bigr)
=\chi\bigl(\{p_u,\cD^{\otimes n}(\cN^c)^{\otimes n}(\rho_u)\}\bigr)
\le\chi\bigl(\{p_u,(\cN^c)^{\otimes n}(\rho_u)\}\bigr).
\]
The equality uses entropy invariance under global transposition; the
inequality uses data processing for the CPTP map $\cD^{\otimes n}$.
Thus $P(\cN)=0$, without restricting block inputs to product states.
For $p\ge1/2$, the erasure environment simulates Bob by retaining its
received qubit with probability $(1-p)/p$ and erasing otherwise.
Thus $P(\cE_{2,p})=0$ as well.

\subsection{A signal event absent from the background}
We now use both channels. The encoding has three orthogonal directions:
two form the background, and the third carries the signal. On $RA$, choose
\begin{equation}
u_0=\frac{2\ket{00}+\ket{11}}{\sqrt5},\qquad
u_1=\frac{\ket{02}+2\ket{13}}{\sqrt5},\qquad
w=\frac{\ket{01}+\ket{12}}{\sqrt2}.
\label{eq:encoding}
\end{equation}
These vectors are orthonormal. Define the background and signal states
\begin{equation}
\rho_0=\tfrac12(\proj{u_0}+\proj{u_1}),\qquad \rho_1=\proj w.
\label{eq:inputs}
\end{equation}
Alice's two actual letters are $\rho_0$ and
$\rho_t=(1-t)\rho_0+t\rho_1$, used with probabilities $3/4$ and $1/4$.
Here $0<t\le1/2$ will be chosen below. The mixtures use Alice's local
randomness. Both letters have helper marginal $I_R/2$, so the helper
alone carries no letter information.

The amplitudes in~\eqref{eq:encoding} make all seven errors detectable:
each error sends the encoding space into its orthogonal complement.
Specifically, let $P_{\mathcal S}$ project onto the span of $u_0,u_1,w$.
For every
nonidentity noise operator $L\in\{D,A_1,A_1^\dagger,A_2,A_2^\dagger,A_3,A_3^\dagger\}$,
\begin{equation}
P_{\mathcal S}(I_R\otimes L)P_{\mathcal S}=0.
\label{eq:detection}
\end{equation}
For example, the relevant diagonal and transition amplitudes cancel as
$\langle u_0|I\otimes D|u_0\rangle=(4-4)/5=0$ and
$\langle u_0|I\otimes A_1|w\rangle=(2-2)/\sqrt{10}=0$.
The other entries vanish by the same cancellations or disjoint support.
Consequently, for every density operator $\rho=P_{\mathcal S}\rho P_{\mathcal S}$,
\[
(\id_R\otimes\cN)(\rho)=\rho/7+\beta_\perp,\qquad
\beta_\perp\succeq0,\quad P_{\mathcal S}\beta_\perp=0.
\]
Thus, only the identity branch can produce an outcome in $\mathcal S$.
Within that branch, the background has no component along $w$.

When the helper arrives, Bob measures $\{\proj w,I-\proj w\}$; otherwise, he records no click. On
$R_BB=(\mathbb C^2\otimes B)\oplus(\mathbb C\ket e\otimes B)$ the
full effects are $F_1=\proj w\oplus0_B$ and $F_0=I-F_1$.
For his binary outcome $Y$ and Alice's
letter $U\in\{0,1\}$,
\begin{equation}
\Pr(Y=1\mid U=0)=0,\qquad
\Pr(Y=1\mid U=1)=a_pt,\qquad a_p=\frac{1-p}{7}.
\label{eq:click}
\end{equation}
A click identifies $U=1$ with certainty. Its contribution to
mutual information is $(a_pt/4)\log_2 4=a_pt/2$; the no-click contribution
is nonnegative because it is a weighted relative entropy. Consequently,
\begin{equation}
I(U:Y)=h_2(a_pt/4)-\tfrac14h_2(a_pt)\ge\tfrac12a_pt.
\label{eq:bob-gain}
\end{equation}

\subsection{Bounding Eve's leakage by a classical coin}
When the helper is erased at Bob, it is available to Eve. Let $E'$
denote the helper environment. For the background and signal states,
define
\begin{equation}
\sigma_i=(\id_R\otimes\cN^c)(\rho_i),\qquad
\epsilon_i=\Tr_R\sigma_i,\qquad
\gamma_i^{(p)}=p\sigma_i\oplus(1-p)\epsilon_i.
\label{eq:environment}
\end{equation}
Here, $i=0,1$ indexes the background and signal component. For the actual
second letter Eve receives $\gamma_t^{(p)}=(1-t)\gamma_0^{(p)}+t\gamma_1^{(p)}$.
The direct sum identifies the full environment $E'E$ in the erasure dilation specified in Appendix~\ref{app:environment}: Eve receives the helper in the first block and does not in the second.
Appendix~\ref{app:environment} proves
\begin{equation}
\sigma_1\preceq\frac{205}{9}\sigma_0,
\qquad \epsilon_1\preceq4\epsilon_0.
\label{eq:order}
\end{equation}
The only nontrivial block comparison is a $2\times2$ rank-one identity.
These bounds establish support containment even when Eve receives $R$.

We bound Eve's Holevo information by expressing her conditional states
as outputs of a fixed preparation channel driven by a biased
classical coin.
\begin{lemma}[Classical coin bound]
\label{lem:coin}
Let $\omega_0,\omega_1$ be density operators on the same finite-dimensional
space, with $\omega_1\preceq c\omega_0$ for a finite $c\ge3$.
For $0\le t\le1/2$, the ensemble with letters $\omega_0$ and
$\omega_t=(1-t)\omega_0+t\omega_1$ and probabilities $3/4,1/4$
has Holevo information at most $3(c-1)t^2/(32\ln2)$.
\end{lemma}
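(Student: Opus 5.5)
The plan is to replace Eve's quantum ensemble by a classical binary coin, use data processing, and then bound the mutual information of a binary-output classical channel by a second-order Taylor estimate of the binary entropy.

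\emph{Step 1: coin decomposition.} Set $\lambda=1/c\le1/3$. Because $\omega_1\preceq c\,\omega_0$, the operator $\tau=(c\,\omega_0-\omega_1)/(c-1)$ is positive semidefinite, and it has unit trace. So $\tau$ is a density operator, and
\[
\omega_0=\lambda\,\omega_1+(1-\lambda)\tau .
\]
Substituting this into $\omega_t=(1-t)\omega_0+t\,\omega_1$ gives $\omega_t=q_1\omega_1+(1-q_1)\tau$ with
\[
q_0=\lambda,\qquad q_1=\lambda+t(1-\lambda).
\]
Hence the ensemble is produced by a classical chain $U\to K\to$ state. Here $K\in\{0,1\}$ is a coin with $\Pr(K=1\mid U=u)=q_u$, and the fixed preparation channel sends $K=1\mapsto\omega_1$ and $K=0\mapsto\tau$. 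Data processing for Holevo information under this preparation channel gives $\chi\le I(U:K)$.

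\emph{Step 2: the binary mutual information.} Let $\bar q=\tfrac34q_0+\tfrac14q_1$. Then
\[
I(U:K)=h_2(\bar q)-\tfrac34h_2(q_0)-\tfrac14h_2(q_1).
\]
Since $h_2''(q)=-1/(q(1-q)\ln2)$, Taylor's theorem with remainder about $\bar q$ shows that this Jensen gap is at most $\tfrac{1}{2\ln2}\,\mathrm{Var}(q_U)\,\max_{q\in[q_0,q_1]}\frac{1}{q(1-q)}$.

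The variance is computed directly. We have $q_0-\bar q=-t(1-\lambda)/4$ and $q_1-\bar q=3t(1-\lambda)/4$, so
\[
\mathrm{Var}(q_U)=\Bigl(\tfrac34\cdot\tfrac1{16}+\tfrac14\cdot\tfrac9{16}\Bigr)t^2(1-\lambda)^2=\tfrac3{16}t^2(1-\lambda)^2 .
\]

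\emph{Step 3: curvature bound (main obstacle).} The delicate point is to bound $q(1-q)$ from below on $[q_0,q_1]$ sharply enough to obtain the constant $3/32$. A cruder chi-square bound on each relative entropy loses a factor of $2$, which is why the Taylor route is used instead.

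Since $t\le1/2$, we have $q_1\le(1+\lambda)/2$. This is at most $1-\lambda$ exactly when $\lambda\le1/3$, which is where the hypothesis $c\ge3$ is needed. Thus $[q_0,q_1]\subseteq[\lambda,1-\lambda]$. The concave function $q(1-q)$ attains its minimum on $[\lambda,1-\lambda]$ at the endpoints, so $q(1-q)\ge\lambda(1-\lambda)$ throughout $[q_0,q_1]$.

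\emph{Step 4: combine.} Putting the pieces together,
\[
\chi\le\frac{1}{2\ln2}\cdot\frac{3t^2(1-\lambda)^2}{16\,\lambda(1-\lambda)}=\frac{3t^2}{32\ln2}\cdot\frac{1-\lambda}{\lambda}=\frac{3(c-1)t^2}{32\ln2},
\]
which is the claimed bound. The only other thing to check carefully is the Taylor-remainder form of the Jensen gap. For this I would write $h_2(q_u)=h_2(\bar q)+h_2'(\bar q)(q_u-\bar q)+\tfrac12h_2''(\xi_u)(q_u-\bar q)^2$ for some $\xi_u$ between $\bar q$ and $q_u$. The linear terms cancel after averaging, and each $\xi_u$ lies in $[q_0,q_1]$, where the curvature bound of Step 3 applies.
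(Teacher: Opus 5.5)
Your proof is correct and follows essentially the same route as the paper: the same coin decomposition with $\tau=(c\omega_0-\omega_1)/(c-1)$ and bias $\lambda+t(1-\lambda)=r_t$, data processing down to $I(U:K)$, and the same curvature bound, where $c\ge3$ enters. Your inequality $q(1-q)\ge\lambda(1-\lambda)$ on $[\lambda,1-\lambda]$ is exactly the paper's $[1+(c-1)s](1-s)\ge1$. The only cosmetic difference is that you bound the Jensen gap with a Lagrange-remainder Taylor expansion in $q$, whereas the paper reparametrizes by $s$ and uses convexity of $h_2(r_s)+(c-1)s^2/(2\ln2)$; the two arguments are equivalent.
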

\begin{proof}
For $0\le s\le1/2$, set $\omega_s=(1-s)\omega_0+s\omega_1$ and write
\begin{equation}
    \omega_s=r_s\omega_1+(1-r_s)\tau,\qquad
    r_s=\frac{1+(c-1)s}{c},\qquad
    \tau=\frac{c\omega_0-\omega_1}{c-1}.
    \label{eq:coin}
\end{equation}
The domination assumption gives $\tau\succeq0$, and $\Tr\tau=1$.
Also $0<r_s<1$ on the stated interval. A coin of bias $r_s$ followed by
preparation of $\omega_1$ or $\tau$ therefore produces $\omega_s$.
Data processing bounds Eve's Holevo information by the mutual information
of the coin. This construction does not require the states to commute.

Both coin outcomes have nonzero probability at $s=0$. We bound the
entropy curvature by setting $f(s)=h_2(r_s)$. On $0\le s\le1/2$,
\begin{equation}
    -f''(s)=\frac{c-1}{\ln2\,[1+(c-1)s](1-s)}
    \le\frac{c-1}{\ln2},
    \label{eq:curvature}
\end{equation}
since $[1+(c-1)s](1-s)=1+s[(c-2)-(c-1)s]\ge1$ for $c\ge3$.
Convexity of $f(s)+(c-1)s^2/(2\ln2)$ therefore gives
\begin{equation}
    \chi(U:E)\le f(t/4)-\tfrac34 f(0)-\tfrac14 f(t)
    \le\frac{3(c-1)t^2}{32\ln2}.
    \label{eq:coin-bound}
\end{equation}
\end{proof}
The erasure flag is independent of $U$, so Eve's Holevo information is
the average of the two branch values. Applying~\eqref{eq:coin-bound}
to~\eqref{eq:order} uses the direct-sum entropy identity
$S(p\sigma\oplus(1-p)\epsilon)=h_2(p)+pS(\sigma)+(1-p)S(\epsilon)$
and yields
\begin{equation}
\chi(U:E'E)\le\frac{3\kappa_p t^2}{32\ln2},\qquad
\kappa_p=p\left(\frac{205}{9}-1\right)+(1-p)(4-1)
=\frac{27+169p}{9},
\label{eq:eve-cost}
\end{equation}
This quadratic bound includes both erasure branches and the full
quantum environment.

\subsection{From an information advantage to a private rate}
Combining~\eqref{eq:bob-gain} and~\eqref{eq:eve-cost},
\begin{equation}
I(U:Y)-\chi(U:E'E)\ge\frac{a_pt}{2}
-\frac{3\kappa_p t^2}{32\ln2}.
\label{eq:gain}
\end{equation}
For every fixed $p<1$, $a_p>0$, so a sufficiently weak signal gives a
positive information difference. Choose $t=8a_p\ln2/(3\kappa_p)$. For $1/2\le p<1$,
$a_p\le1/14$ and $\kappa_p\ge12$, so
$0<t\le\ln2/63<1/2$. Substitution gives Eq.~\eqref{eq:rate}.
To obtain an operational rate, repeat this encoding and fixed measurement.
They define a memoryless wiretap channel with classical input $U$,
classical receiver output $Y$, and Eve's unchanged quantum marginal.
The private coding theorem~\cite{Devetak2005} achieves every rate below
$I(U:Y)-\chi(U:E'E)$ by classical coding over these uses.
Appendix~\ref{app:operational} gives the induced channel explicitly and
checks the trace-distance secrecy criterion. Fixing $p$ and $t$ before
the coding limit proves Theorem~\ref{thm:private}.

\section{Why the measurement must be joint}
\label{sec:meaning}
For the channel $\cN$ alone, Eve can measure the transpose of Bob's
POVM on her simulated transposed state and reproduce his outcome
distribution. For its product with an antidegradable channel, the
simulation instead transposes only one output factor. This still works for every decoder whose effects remain
positive under that partial transpose, giving the following converse.

\begin{proposition}[A finite-code bound for PPT decoding]
\label{prop:ppt}
Let $\cN:A_1\to B_1$ and $\cA:A_2\to B_2$ be finite-dimensional
channels with full complements on $E_1,E_2$. Suppose $\cN$ admits a CPTP simulator $\cD\cN^c=\T_{B_1}\cN$ and
$\cA$ is antidegradable. Write $E=E_1E_2$. For any blocklength $n\ge1$, consider a code
for $(\cN\otimes\cA)^{\otimes n}$ with $M\ge2$ uniform messages,
arbitrary mixed inputs, and decoder $\{F_m\}$ on $B_1^nB_2^n$.
If $F_m^{\T_{B_1^n}}\succeq0$ for every $m$, then
\begin{equation}
    \varepsilon+\delta\ge1-1/M,
    \label{eq:ppt-bound}
\end{equation}
where $\varepsilon$ is average decoding error and
$\delta=\tfrac12\|\rho_{UE^n}-\pi_U\otimes\rho_{E^n}\|_1$ is secrecy
distance from the uniform message independent of the full environment,
where $\pi_U=I_U/M$ and $\rho_{UE^n}=M^{-1}\sum_m\proj m\otimes\eta_m$. Fixed regrouping of the output factors is understood.
\end{proposition}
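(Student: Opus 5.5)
We construct, from Eve's full environment $E^n=E_1^nE_2^n$, a measurement that reproduces Bob's decoding statistics exactly. Secrecy will then force Bob's success probability down to roughly $1/M$.

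\textbf{Step 1: Eve's simulation.} Since $\cA$ is antidegradable, fix a CPTP map $\cR:E_2\to B_2$ with $\cR\cA^c=\cA$. The simulator then acts on Eve's side as
\[
\cD^{\otimes n}\otimes\cR^{\otimes n}:E_1^nE_2^n\to B_1^nB_2^n .
\]
For a product input, applying $(\cN^c\otimes\cA^c)^{\otimes n}$ followed by this map yields $(\T_{B_1}\cN\otimes\cA)^{\otimes n}$, since $\T_{B_1}$ acts locally on each factor. By linearity the same identity holds for arbitrary mixed (entangled) block inputs $\rho_m$. Therefore
\[
(\cD^{\otimes n}\otimes\cR^{\otimes n})(\eta_m)=\T_{B_1^n}(\beta_m),
\]
where the partial transpose acts on all $B_1$ factors and the $B_2$ factors are left untouched, up to the fixed regrouping.

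\textbf{Step 2: a legitimate measurement for Eve.} Set $G_m=F_m^{\T_{B_1^n}}$. By hypothesis each $G_m\succeq0$, and
\[
\sum_m G_m=\bigl(\textstyle\sum_m F_m\bigr)^{\T_{B_1^n}}=I,
\]
so $\{G_m\}$ is a POVM. The partial transpose is self-adjoint for the Hilbert–Schmidt pairing, which gives
\[
\Tr\bigl(G_m\,\T_{B_1^n}(\beta_m)\bigr)=\Tr(F_m\beta_m).
\]
Pulling the POVM back through the simulator, Eve obtains the POVM
\[
K_m=(\cD^{\otimes n}\otimes\cR^{\otimes n})^\dagger(G_m),
\]
which is a valid POVM because the adjoint of a CPTP map is unital and positive. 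It satisfies $\Tr(K_m\eta_{m'})=\Tr(F_m\beta_{m'})$ for all $m,m'$. Consequently Eve's guessing success equals Bob's, namely $1-\varepsilon$.

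\textbf{Step 3: from guessing to secrecy distance.} Consider the classical–quantum state $\rho_{UE^n}$ and the binary test
\[
T=\sum_m\proj m\otimes K_m .
\]
Then $\Tr(T\rho_{UE^n})=1-\varepsilon$. On the decoupled state, $\Tr\bigl(T(\pi_U\otimes\rho_{E^n})\bigr)=M^{-1}\sum_m\Tr(K_m\rho_{E^n})=1/M$. Since $0\preceq T\preceq I$, the trace-distance bound gives
\[
\delta\ge(1-\varepsilon)-1/M,
\]
which rearranges to \eqref{eq:ppt-bound}.

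The step needing the most care is Step 1, which must hold for entangled block inputs and not just product inputs. The point is that $\T_{B_1^n}$ equals the tensor product of the single-factor transposes, so the operator identity $\cD\cN^c=\T_{B_1}\cN$ tensors up as an identity of linear maps on $\cL(A_1^{\otimes n}\otimes A_2^{\otimes n})$. Linearity then extends it from product inputs to all inputs. The regrouping of the output factors must be chosen consistently on Bob's and Eve's sides.
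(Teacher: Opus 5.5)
Your proposal is correct and follows essentially the same route as the paper. Eve applies $\cD^{\otimes n}\otimes\cR^{\otimes n}$ to obtain $\beta_m^{\T_{B_1^n}}$, pulls the POVM $\{F_m^{\T_{B_1^n}}\}$ back through the adjoint of this CPTP map to reproduce Bob's statistics exactly, and the effect $\sum_m\proj m\otimes K_m$ gives $\delta\ge1-\varepsilon-1/M$ by the variational characterization of trace distance; your explicit check that the simulation identity tensorizes and extends by linearity to entangled block inputs is a step the paper only asserts.
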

\begin{proof}
Choose a CPTP simulator $\cR\cA^c=\cA$ and set $\cL=\cD\otimes\cR$.
For each message $x$, Eve's and Bob's outputs satisfy
$\cL^{\otimes n}(\eta_x)=\beta_x^{\T_{B_1^n}}$.
The operators $G_m=(\cL^\dagger)^{\otimes n}(F_m^{\T_{B_1^n}})$
are positive and sum to the identity, hence form an Eve POVM. They obey
\[
\Tr(G_m\eta_x)=
\Tr(F_m^{\T_{B_1^n}}\beta_x^{\T_{B_1^n}})=\Tr(F_m\beta_x).
\]
The joint guessing test $\sum_m\proj m_U\otimes G_m$ is an effect.
It succeeds with probability $1-\varepsilon$ on $\rho_{UE^n}$ and $1/M$
on $\pi_U\otimes\rho_{E^n}$. The variational characterization of trace
distance therefore gives $1-\varepsilon-1/M\le\delta$.
\end{proof}

Positive-partial-transpose (PPT) decoding thus has zero private rate.
This includes separable measurements and those implemented by local
operations and classical communication across the two factors
\cite{MatthewsEtAl2009}. Our effect $\proj w$ has partial-transpose
eigenvalue $-1/2$. The protocol uses this measurement only within each
product use; classical coding suffices across uses. The distinction concerns
the measurement: because $\T\cN=\cD\cN^c$ is completely positive,
Bob's output is itself PPT across the factors even for entangled inputs. PPT positivity alone does not preclude privacy, e.g., PPT entangled states
can have positive distillable key~\cite{HHHO2005}.

\section{Discussion}
The background mixture and signal dilution perform different tasks:
the first fills Eve's relevant support while leaving Bob's signal
direction empty; the second makes leakage smaller than the receiver's
gain. More generally, the same proof applies whenever a receiver event has zero background probability and positive signal probability, while the environmental signal support is contained in the background support. On the background support, the background state is invertible, so
$c=\max\{3,\|\omega_0^{-1/2}\omega_1\omega_0^{-1/2}\|_\infty\}$
is finite and satisfies $\omega_1\preceq c\omega_0$. The inverse is restricted to that support. Related support arguments detect positive quantum capacity
\cite{SinghDatta2022}. Here, they certify privacy for mixed letters under a fixed receiver measurement.

The relevant information comparison is the regularized less-noisy
order~\cite{Watanabe2012,HircheRouzeFranca2022}: the complement
$\cN^c$ dominates $\cN$ in Holevo information at every blocklength.
However, $\cN^c$ is not degradable, since otherwise $\cN$ would be
antidegradable and could not activate with the erasure channel.
The construction therefore, gives an explicit separation between
degradability and the regularized less-noisy order, through the
connection to private-capacity superactivation established in~\cite{HL2023}. A qutrit channel with zero private capacity, despite being neither
PPT, nor antidegradable, was constructed in
Ref.~\cite{ZhuWang2026Incapacity}. Its complement dominates the receiver even in the presence of an arbitrary quantum reference.
This complete less-noisy order tensorizes~\cite{HircheRouzeFranca2022}
and therefore excludes activation with an antidegradable helper.
Our example shows that information domination at every blocklength alone does not provide this stability.


For the outlook, the remaining structural question is which zero-private-capacity channels admit the support asymmetry exhibited here and hence can turn a failure of joint simulation into an explicit private protocol. It would also be important to understand whether the classical capacity of quantum channels can exhibit superadditivity.

\paragraph{Acknowledgements.} The authors thank Andreas Winter for helpful discussions. XW was partially supported by the National Natural Science Foundation of China (Grant No.~92576114), the Guangdong Provincial Quantum Science Strategic Initiative (Grant No.~GDZX2403008, GDZX2503001), and the Guangdong Provincial Key Lab of Integrated Communication, Sensing and Computation for Ubiquitous Internet of Things (Grant No.~2023B1212010007).

\section*{AI statement}
This work was developed with assistance from QudeLeap's AI Quantum Scientist, a system under development. An initial activating example was identified and the proof strategies was explored through interactions with large language models. The authors checked, rewrote and organized the proofs, and prepared the manuscript. 
Artificial intelligence was also
used to prepare the initial manuscript and helped with revisions.
The authors retain  responsibility for the correctness, attribution, and presentation of the final results.

\appendix
\section{An explicit transpose simulator}
\label{app:simulator}
Fix the eight Kraus operators $K_e=\sqrt{w_e}L_e$, where
\begin{equation}
\begin{aligned}
    (L_0,\ldots,L_7)&=(I,D,A_1,A_1^\dagger,A_2,A_2^\dagger,A_3,A_3^\dagger),\\
    (w_0,\ldots,w_7)&=\tfrac1{56}(8,1,4,4,12,12,31,31).
\end{aligned}
\label{eq:kraus}
\end{equation}
The isometry $V_{\cN}:A\to B\otimes E$ is
$V_{\cN}\ket\psi=\sum_e K_e\ket\psi\otimes\ket e$; Kraus completeness
gives $V_{\cN}^\dagger V_{\cN}=I_A$. Its full complement is
$\cN^c(X)=\Tr_B(V_{\cN}XV_{\cN}^\dagger)$, with matrix elements
$[\cN^c(X)]_{ef}=\Tr(K_eXK_f^\dagger)$.
We construct a CPTP map satisfying Eq.~\eqref{eq:transpose}.

Identify the simulator output $C=\mathbb C^4$ with $B$ in the numbered
bases. With $E=\mathbb C^8$ and
$\ket{e,c}=\ket e_E\otimes\ket c_C$, define
\begin{equation}
\begin{aligned}
    v_1&=\tfrac12\ket{0,0}-\tfrac18\ket{1,0}+\ket{3,1},\\
    v_2&=\tfrac12\ket{0,0}+\tfrac18\ket{1,0}+\ket{5,2},\qquad
    v_3=\ket{7,3},\\
    v&=2\ket{0,2}+\ket{1,2}-2\ket{2,1}+\ket{3,3}+3\ket{4,0},\\
    a&=\ket{3,0}-2\ket{5,1}+\ket{7,2},\qquad
    b=-\ket{5,0}+\ket{7,1},\qquad z=\ket{7,0}.
\end{aligned}
\label{eq:sim-vectors}
\end{equation}
Set $V=(v_1,v_2,v_3):\mathbb C^3\to E\otimes C$ and
\begin{equation}
M=\begin{pmatrix}16&8&29\\8&64&77\\29&77&251\end{pmatrix},\qquad
Q=VMV^\dagger+4(\proj v+\proj a+\proj b)+51\proj z.
\label{eq:sim-positive}
\end{equation}
The leading principal minors of $M$ are $16$, $960$, and $128000$;
therefore $M\succ0$ and $Q\succeq0$.

Define the real orthogonal reflection
\begin{equation}
\mathsf R\ket{e,c}=s_e\ket{\pi(e),3-c},\qquad
\begin{aligned}
    \pi&=(0,1,3,2,5,4,7,6),\\
    s&=(1,-1,1,1,-1,-1,1,1),
\end{aligned}
\label{eq:reflection}
\end{equation}
where the entries list the images and signs in order $e=0,\ldots,7$.
With $W=\operatorname{diag}(\sqrt{w_0},\ldots,\sqrt{w_7})$, set
\begin{equation}
H=\frac{Q+\mathsf RQ\mathsf R^\dagger}{560},\qquad
J_{\cD}=(W^{-1}\otimes I_C)H(W^{-1}\otimes I_C).
\label{eq:sim-choi}
\end{equation}
All $w_e$ in~\eqref{eq:kraus} are positive. We use the unnormalized,
input-first Choi convention
$J_{\cD}=\sum_{e,f}\ket e\bra f\otimes\cD(\ket e\bra f)$.
Thus, writing $H_{ef}$ for the $4\times4$ blocks of $H$, the map is explicitly
\begin{equation}
\cD(Z)=\sum_{e,f=0}^7\frac{Z_{ef}}{\sqrt{w_ew_f}}H_{ef}.
\label{eq:sim-map}
\end{equation}
Positivity of $J_{\cD}$ follows from~\eqref{eq:sim-positive} by congruence.

Trace preservation and the composition identity are the following
finite rational equalities:
\begin{equation}
\Tr_C H=W^2,\qquad
\sum_{e,f=0}^7(L_f^\dagger L_e)_{ji}H_{ef}
=\cN(\ket i\bra j)^\T\quad(0\le i,j\le3).
\label{eq:sim-identities}
\end{equation}
These identities follow by expanding
Eqs.~\eqref{eq:sim-vectors}--\eqref{eq:sim-choi}. In the second identity the coefficient
$(L_f^\dagger L_e)_{ji}$ follows from
$\Tr(L_e\ket i\bra jL_f^\dagger)=(L_f^\dagger L_e)_{ji}$.
The first equality in~\eqref{eq:sim-identities} gives
$\Tr_CJ_{\cD}=I_E$. For the second, the factors $\sqrt{w_ew_f}$
from $[\cN^c(\ket i\bra j)]_{ef}$ cancel those in~\eqref{eq:sim-map}.
Thus $\cD$ is CPTP and $\cD\cN^c=\T\cN$, as required.

\section{Environmental support and order}
\label{app:environment}
Fix the erasure isometry $J_p:R\to R_B\otimes E'$, where both outputs
are $\mathbb C^2\oplus\mathbb C\ket e$, by
\[
J_p\ket\psi=\sqrt{1-p}\ket\psi_{R_B}\ket e_{E'}
+\sqrt p\ket e_{R_B}\ket\psi_{E'}.
\]
The two terms occupy orthogonal sectors in each output. Tracing out $R_B$
therefore removes their cross terms even when $R$ is correlated with $A$,
which proves the full-environment direct sum in~\eqref{eq:environment}.
The product dilation $V_p$ is $J_p\otimes V_{\cN}$ with outputs regrouped
as $R_BB:E'E$.

We retain both the helper and all eight environment levels. Number the
basis of $R\otimes E$ by $\ket j=\ket{r,e}$ with $j=8r+e$.
Use $W$ from Appendix~\ref{app:simulator} and remove only the positive
Kraus weights by the algebraic congruence
\begin{equation}
\widetilde\sigma_i=(I_R\otimes W^{-1})\sigma_i(I_R\otimes W^{-1}).
\label{eq:env-congruence}
\end{equation}
These unweighted matrices need not be normalized states. Let $T$ have
the following eight linearly independent columns:
\begin{equation}
T=\bigl(\ket0,\ \ket1+\tfrac12\ket{10},\
\ket8,\ \ket9-\tfrac12\ket3,\
\ket2+\ket{12},\ \ket5-\ket{11},\
\ket4+2\ket{14},\ 2\ket7-\ket{13}\bigr).
\label{eq:env-basis}
\end{equation}
Set
\begin{equation}
A=\begin{pmatrix}1/2&4/5\\4/5&2\end{pmatrix},\qquad
B=\begin{pmatrix}1/2&-2\\-2&8\end{pmatrix},\qquad Z=\operatorname{diag}(1,-1).
\label{eq:env-small}
\end{equation}
Substituting the Kraus operators in~\eqref{eq:kraus} into the
complementary-channel formula gives the full sixteen-dimensional
identities
\begin{equation}
\begin{aligned}
    \widetilde\sigma_0&=T\bigl(A\oplus ZAZ\oplus\tfrac25 I_2
    \oplus\tfrac1{10}I_2\bigr)T^\dagger,\\
    \widetilde\sigma_1&=T\bigl(B\oplus ZBZ\oplus\tfrac12 I_2
    \oplus0_2\bigr)T^\dagger.
\end{aligned}
\label{eq:env-blocks}
\end{equation}
They include all components and cross terms; $T$ is not assumed isometric.
Since $A_{00}=1/2>0$ and $\det A=9/25>0$, $A\succ0$.
The full column rank of $T$ and invertibility of $I_R\otimes W$ show that
$\sigma_0$ is positive on $\operatorname{ran}[(I_R\otimes W)T]$, and
both environmental outputs vanish on its orthogonal complement. The nontrivial order follows from
\begin{equation}
\boxed{\frac{205}{9}A-B=\frac29
    \begin{pmatrix}7\\13\end{pmatrix}
    \begin{pmatrix}7&13\end{pmatrix}\succeq0.}
\label{eq:env-rank-one}
\end{equation}
The scalar signal-to-background ratios are $5/4$ and zero, so
congruence proves $\sigma_1\preceq(205/9)\sigma_0$. In particular,
$\rank\sigma_0=8$ and $\rank\sigma_1=4$, with support containment.

For the reduced environment, direct partial trace gives
\begin{equation}
\begin{aligned}
    W^{-1}\epsilon_0W^{-1}
    &=\operatorname{diag}(1,4,9/10,9/10,1/2,1/2,2/5,2/5),\\
    W^{-1}\epsilon_1W^{-1}
    &=\operatorname{diag}(1,16,5/2,5/2,1/2,1/2,0,0).
\end{aligned}
\label{eq:reduced-env}
\end{equation}
The coordinate ratios are $1,4,25/9,25/9,1,1,0,0$, proving
$\epsilon_1\preceq4\epsilon_0$. This sharper reduced-environment bound
is why the coin cost is averaged over the two erasure branches.

\section{Coding and the operational secrecy convention}
\label{app:operational}
\subsection{The fixed measurement preserves Eve's marginal}
Let $V_p:RA\to R_BB E'E$ be the product
Stinespring isometry and put $\widehat\rho_0=\rho_0$,
$\widehat\rho_1=\rho_t$. The fixed measurement defines the memoryless
wiretap channel with classical input and receiver output
\begin{equation}
u\longmapsto\sum_{y=0}^1\proj y_Y\otimes
\Tr_{R_BB}\!\left[(\sqrt{F_y}\otimes I)V_p\widehat\rho_u V_p^\dagger
(\sqrt{F_y}\otimes I)\right].
\label{eq:measured-wiretap}
\end{equation}
Summing over $y$ leaves Eve's original marginal unchanged; Bob's outcome
is not publicly disclosed. The direct coding theorem~\cite[Theorem~1]{Devetak2005}
applies without auxiliary communication and achieves every rate below
$I(U:Y)-\chi(U:E'E)$. Its vanishing Holevo leakage implies the secrecy
criterion~\eqref{eq:errors-operational} by quantum Pinsker:
$\delta_n\le\sqrt{(\ln2)\chi(\mathsf M:E'^nE^n)/2}$.
Here, $\mathsf M$ is the uniform transmitted message register. Since Bob's output is classical,
any block decoder can be implemented as classical processing of $Y^n$.
Throughout this coding argument, $p$ and $t$ are fixed before $n\to\infty$.

\subsection{The trace-distance capacity formula}
Write $a_n$ for the inner supremum in Eq.~\eqref{eq:capacity} and
$d_E=\dim E$. For any $(n,M)$ code with $M\ge2$, let
$\rho_{UE^n}=M^{-1}\sum_m\proj m\otimes\eta_m$. Its secrecy error is
$\delta=\tfrac12\|\rho_{UE^n}-\pi_U\otimes\bar\eta\|_1$.
Entropy continuity~\cite{Audenaert2007}, applied in dimension $Md_E^n$, gives
\[
\chi(U:E^n)=S(\pi_U\otimes\bar\eta)-S(\rho_{UE^n})
\le\delta(\log M+n\log d_E)+h_2(\delta).
\]
Fano's inequality and data processing give
$\chi(U:B^n)\ge(1-\varepsilon)\log M-h_2(\varepsilon)$.
Since the code ensemble is allowed in the definition of $a_n$,
\begin{equation}
(1-\varepsilon-\delta)\log M
\le a_n+\delta n\log d_E+h_2(\varepsilon)+h_2(\delta).
\label{eq:operational-converse}
\end{equation}
Dividing by $n$ and taking $\varepsilon,\delta\to0$ proves the converse
in Eq.~\eqref{eq:capacity}. For achievability, fix a block size and a
finite ensemble, apply the private coding theorem~\cite{Devetak2005}
to repeated blocks, and use quantum Pinsker to convert vanishing Holevo
leakage into vanishing $\delta$. Optimizing over block sizes and ensembles
establishes the stated capacity formula.

\clearpage
\section{Lean correspondence}
\label{app:lean}

The formalization covers both zero-capacity statements for
$1/2\le p\le1$ and the positive lower bound~\eqref{eq:rate} for
$1/2\le p<1$. Capacity is defined by Eq.~\eqref{eq:capacity}, with
arbitrary finite mixed-state ensembles at every positive blocklength.
The main declaration is \texttt{superactivation\_main} in namespace
\texttt{QIT.Transition}. Its only hypotheses are the stated bounds on $p$. The channels, full
complements, input states, and receiver POVM are constructed explicitly.

The proof uses Lean~4.30.0, Mathlib, and foundational modules from
Lean-QIT~\cite{Lean4,Mathlib,LeanQIT}. Its transitive axiom dependencies
are \texttt{propext}, \texttt{Classical.choice}, and \texttt{Quot.sound};
there are no admitted steps or additional axioms in the checked proof.
Source code, pinned dependencies, and reproduction instructions are at
\url{https://github.com/QudeLeap/lean-private-capacity}.

Figure~\ref{fig:lean-dependencies} summarizes the argument, and
Table~\ref{tab:lean-correspondence} identifies the main declarations.
Write $r_p$ for the lower bound in Eq.~\eqref{eq:rate} and
$R_{\mathrm{meas}}=I(U:Y)-\chi(U:E'E)$ at the signal weight chosen
in Sec.~\ref{sec:proof}.

\begin{figure}[H]
\centering
\begin{tikzpicture}[
item/.style={draw=MainBlue!65,rounded corners=2pt,align=center,
font=\small,inner sep=5pt,minimum height=.82cm},
dep/.style={-{Stealth[length=4pt]},draw=MainBlue!85,line width=.55pt},
x=1cm,y=1cm]
\node[item,fill=SoftBlue,text width=14.1cm] (objects) at (0,0)
{Explicit channels and full complements; input ensemble and binary POVM};
\node[item,text width=4.1cm] (sim) at (-4.8,-1.4)
{Transpose simulator\\Erasure simulator};
\node[item,text width=4.1cm] (bob) at (0,-1.4)
{Bob's event probabilities\\$0$ and $a_pt$};
\node[item,text width=4.1cm] (order) at (4.8,-1.4)
{Full environment bounds\\$\sigma_1\preceq(205/9)\sigma_0$, $\epsilon_1\preceq4\epsilon_0$};
\node[item,text width=4.1cm] (zero) at (-4.8,-3)
{Two zero capacities\\$P(\cN)=P(\cE_{2,p})=0$};
\node[item,text width=4.1cm] (gain) at (0,-3)
{Linear receiver gain\\$I(U:Y)\ge a_pt/2$};
\node[item,text width=4.1cm] (coin) at (4.8,-3)
{Coin lemma and flag weights\\Eq.~\eqref{eq:eve-cost}};
\node[item,text width=8.9cm] (rate) at (2.4,-4.6)
{Choice of signal weight\\$R_{\mathrm{meas}}\ge r_p>0$};
\node[item,fill=SoftBlue,text width=14.1cm] (main) at (0,-6.5)
{Superactivation for the capacity formula~\eqref{eq:capacity}\\
$P(\cN)=P(\cE_{2,p})=0$, \quad $P(\cE_{2,p}\otimes\cN)\ge r_p>0$};
\draw[dep] ([xshift=-4.8cm]objects.south) -- (sim.north);
\draw[dep] (objects.south) -- (bob.north);
\draw[dep] ([xshift=4.8cm]objects.south) -- (order.north);
\draw[dep] (sim) -- (zero);
\draw[dep] (bob) -- (gain);
\draw[dep] (order) -- (coin);
\draw[dep] (gain.south) -- ([xshift=-2.4cm]rate.north);
\draw[dep] (coin.south) -- ([xshift=2.4cm]rate.north);
\draw[dep] (zero.south) -- ([xshift=-4.8cm]main.north);
\draw[dep] (rate.south) -- node[right,font=\footnotesize,align=left]
{Holevo data processing\\and the one-use capacity bound}
([xshift=2.4cm]main.north);
\end{tikzpicture}
\caption{Logical dependencies of the formalized argument for
$1/2\le p<1$. Arrows run from proof ingredients to conclusions, with
each box grouping the relevant Lean declarations. The final capacity statement uses the regularized information formula.}
\label{fig:lean-dependencies}
\end{figure}

Positivity is established by Gram or LDL decompositions. For the flagged
leakage bound, Lean constructs one preparation channel for the two
environmental branches. These proofs establish the displayed
inequalities while using intermediate representations adapted to finite matrix calculations and the available entropy library.

The operational coding theorem and its trace-distance formulation in
Appendix~\ref{app:operational}, Proposition~\ref{prop:ppt}, and the
further consequences in the Discussion are outside this formalization.
Operational achievability uses the external private coding theorem
\cite{CWY2004,Devetak2005}.

\begin{table}[H]
\centering
\small
\setlength{\tabcolsep}{4pt}
\renewcommand{\arraystretch}{1.08}
\caption{Paper--Lean correspondence. Unqualified names belong to
\texttt{QIT.Transition}. Definitions are marked \emph{def.}; the
remaining entries are proved statements.}
\label{tab:lean-correspondence}
\vspace{4pt}
\begin{tabular}{@{}>{\raggedright\arraybackslash}p{.42\linewidth}
          >{\raggedright\arraybackslash}p{.55\linewidth}@{}}
\toprule
Paper statement or object & Lean declaration\\
\midrule
\multicolumn{2}{@{}l}{\textit{Channels and zero capacities}}\\[4pt]
Capacity formula~\eqref{eq:capacity}
& \texttt{QIT.Channel.privateCapacity} (def.)\\[5pt]
Main channel~\eqref{eq:main-channel} and its full complement
& \texttt{mainChannel} (def.)\newline
\texttt{complement\_isComplementOf}\\[5pt]
Simulator matrices, Appendix~\ref{app:simulator}, and
$\cD\cN^c=\T\cN$
& \texttt{QIT.TransitionPaperData.simulator\_formula}\newline
\texttt{simulator\_identity}\\[5pt]
$P(\cN)=P(\cE_{2,p})=0$, $1/2\le p\le1$
& \texttt{individual\_capacities\_zero}\\
\midrule
\multicolumn{2}{@{}l}{\textit{Encoding, measurement, and environmental information}}\\[4pt]
Input states~\eqref{eq:inputs} and helper marginal $I_R/2$
& \texttt{inputState} (def.)\newline
\texttt{inputState\_helper\_marginal}\\[5pt]
Actual letters $\rho_0,\rho_t$ and priors $3/4,1/4$
& \texttt{mixedSignal}, \texttt{ensemble} (def.)\\[5pt]
Error detection~\eqref{eq:detection}
& \texttt{QIT.TransitionPaperData.error\_detection}\\[5pt]
Fixed binary POVM $\{F_1,I-F_1\}$
& \texttt{receiverEffect} (def.)\newline
\texttt{receiverEffect\_pos},
\texttt{receiverEffect\_le\_one}\\[5pt]
Bob's information and linear bound~\eqref{eq:bob-gain}
& \texttt{measured\_information}\newline
\texttt{rare\_signal\_gain}\\[5pt]
Full environment~\eqref{eq:environment}, with both erasure branches
& \texttt{product\_isComplementOf}\newline
\texttt{full\_environment}\\[5pt]
Environmental order bounds~\eqref{eq:order}
& \texttt{environment\_order}\newline
\texttt{erased\_environment\_order}\\[5pt]
Classical coin bound, Lemma~\ref{lem:coin}
& \texttt{QIT.DominatedCoin.holevo\_quadratic\_bound}\\[5pt]
Branch-weighted leakage~\eqref{eq:eve-cost}
& \texttt{QIT.FlaggedCoin.holevo\_bound}\newline
\texttt{environment\_information\_upper}\\
\midrule
\multicolumn{2}{@{}l}{\textit{Information advantage and superactivation}}\\[4pt]
Fixed-measurement bound~\eqref{eq:gain} and optimized rate
& \texttt{measuredRate\_lower}\newline
\texttt{certifiedRate\_le\_measuredRate}\\[5pt]
Theorem~\ref{thm:private}: the two zeros and
$P(\cE_{2,p}\otimes\cN)\ge r_p>0$
& \texttt{superactivation\_main}\\[5pt]
Half-erasure specialization, $r_{1/2}=3\ln2/10927$
& \texttt{half\_erasure\_rate}\\
\bottomrule
\end{tabular}
\end{table}

\end{document}